\newtheorem{lemma}{Lemma}
\newtheorem{theorem}{Theorem}
\newcommand{\RomanNumeralCaps}[1]
\renewcommand{\aa}{\mathbf{a}}
\newcommand{\bd}{\partial}
\newcommand{\dd}{\mathbf{d}}
\newcommand{\DD}{\mathcal{D}}
\newcommand{\eeta}{\boldsymbol{\eta}}
\newcommand{\FF}{\mathbf{F}}
\newcommand{\JJ}{\mathbf{J}}
\newcommand{\nnu}{\boldsymbol{\nu}}
\newcommand{\ttau}{\boldsymbol{\tau}}
\newcommand{\ssigma}{\boldsymbol{\sigma}}
\newcommand{\rr}{\mathbf{r}}
\newcommand{\RR}{\mathbf{R}}
\renewcommand{\SS}{\mathbf{S}}
\newcommand{\xx}{\mathbf{x}}
\newcommand{\uu}{\mathbf{u}}
\renewcommand{\vv}{\mathbf{v}}
\newcommand{\yy}{\mathbf{y}}
\newcommand{\pderiv}[2]{\frac{\partial #1}{\partial #2}}
\newcommand{\jump}[1]{[\![ #1 ]\!]}
\title{Two-Dimensional Vesicle Hydrodynamics from Hydrophobic Attraction Potential}
\author{
Szu-Pei Fu\aff{1},
Bryan Quaife\aff{2},
Rolf Ryham\aff{1}, \and
Yuan-Nan Young\aff{3}
}
 \affiliation{
\aff{1}Department of Mathematics, \\Fordham University, Bronx, New York 10458, USA
\aff{2}Department of Scientific Computing, \\Florida State University, Tallahassee, Florida 32306, USA
\aff{3}Department of Mathematical Sciences, New Jersey Institute of Technology,\\ Newark, New Jersey 07102, USA
 }
\begin{document}

\maketitle

\begin{abstract}
  We develop a new model, to our knowledge,  for the many-body  hydrodynamics of amphiphilic 
  Janus particles suspended in a viscous background flow. 
  The Janus particles interact through a hydrophobic attraction potential 
  that leads to self-assembly into bilayer structures.   
  We adopt an efficient integral equation method for solving the screened
  Laplace equation for hydrophobic attraction and 
  for solving the mobility problem for hydrodynamic interactions. 
  The integral equation formulation accurately captures both interactions for near touched boundaries. 
  Under a linear shear flow,  we observe the tank-treading deformation
  in a two-dimensional vesicle made of Janus particles.
  The results yield measurements of inter-monolayer friction, membrane permeability, 
  and at large shear rates, membrane rupture. 
  The simulations studies include a vesicle in parabolic flow and vesicle-vesicle 
  interactions in shear and extensional flows. The hydrodynamics of the Janus 
  particles vesicle replicate the behaviour of an inextensible elastic vesicle membrane.
\end{abstract}

\begin{keywords}
Authors should not enter keywords on the manuscript, as these must be chosen by the author during the online submission process and will then be added during the typesetting process (see \href{https://www.cambridge.org/core/journals/journal-of-fluid-mechanics/information/list-of-keywords}{Keyword PDF} for the full list).  Other classifications will be added at the same time.
\end{keywords}

{\bf MSC Codes }  {\it(Optional)} Please enter your MSC Codes here

\section{\label{intro}Introduction}
Described by physicist Pierre-Gilles de Gennes as ``another animal in soft matter physics", the
Janus particle--often a spherical particle with a hydrophobic and a
hydrophilic hemisphere--exhibits complex aggregate, clustering, and
self-assembly into mesoscopic and macroscopic structures that are
relevant to a wide range of applications in biology and bioengineering (\cite{deGennes1991}).
Whether it is surface chemistry or polarity under an external field, the
dynamics of Janus particles in a viscous solvent is inevitably the combination of long-range
hydrodynamics interactions with both short- and intermediate-range 
particle-particle interactions. Such multi-scale nature of Janus particle dynamics underlies 
the richness of a Janus particle suspension, as \cite{deGennes1991} suggested by the example of 
a ``thin film of Janus grains" that can breathe due to the interstices between Janus particles.

\begin{figure}
\begin{center}
\includegraphics[width=\textwidth]{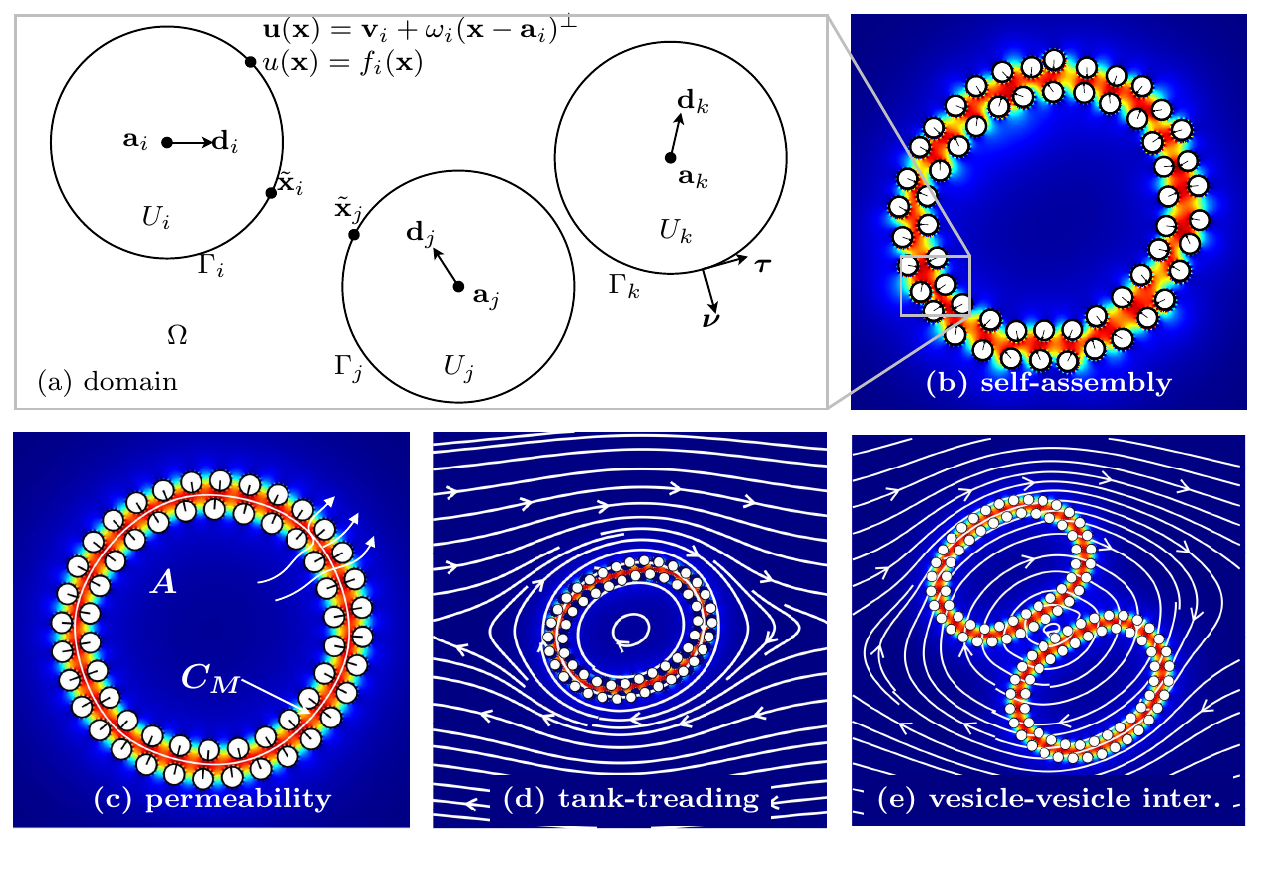}
\end{center}
\caption{Panel (a) illustrates the domain and boundary conditions for
  the mobility problem. The color map in panels (b)--(e) is the solution
  $u$ of \eqref{eq:SL}. Red is for $u = 1$ and blue is for $u = 0$.
  Particles self-assemble into vesicle bilayers (panel (b)) and
  eventually arrange along inner and outer leaflets. In panel (c), $C_M$
  is the midplane curve, $A$ is the enclosed area, and $L$ is the arc
  length. Initially stretched vesicles relax to equilibrium with fluid
  flow across the bilayer. The study considers a single vesicle in
  background flows such as shear flow (panel (d)) and hydrodynamic
  vesicle-vesicle interactions (panel (e)).}
\label{fig:figure0}
\end{figure}

Recently,~\cite{Fu20} illustrated that the hydrophobic interactions
between Janus particles (JP) in a viscous solvent can be used as a
coarse-grained model to capture the mechanics of an elastic bilayer
membrane of amphiphilic macromolecules such as lipids. Depending on
their total number and geometry, JP suspensions can aggregate to form a
micelle, a patch of bilayer membrane with open ends, and a self-enclosed
bilayer membrane, referred to as a JP vesicle. Using a hybrid continuum
model for the interactions between amphiphilic particles in a viscous
solvent and with a boundary integral formulation,~\cite{Fu20} showed
that the granularity of membrane remodeling, as occurs during fusion and
fission of bilayers, can be accurately captured by the coarse-grained
model.  

In the present work, we extend the hybrid continuum model in~\cite{Fu20}
for the Janus suspension to incorporate the collective hydrodynamics of
a Janus suspension under various flows. The JP vesicles in our
simulations replicate  well-know vesicle hydrodynamics such as
tank-treading and inter-leaflet slippage in a shear flow and migration
in Poiseuille flow from continuum models. Furthermore we use the hybrid
continuum model to investigate permeability and rupture of a bilayer
membrane due to an imposed flow.  Finally, we compare a pair of
interacting JP vesicles with a continuum model of a pair of vesicles.

\cite{Brandner2019} used the coarse-grained force field with a lattice
Boltzmann molecular dynamics to simulate the hydrodynamics of a
nano-sized vesicle under a shear flow.  In MD simulations, the
hydrodynamic interactions for the solvent phase are often approximated
by an implicit solvent coarse-grained model.  In the present work, the
hydrodynamic interactions come from the mobility problem for the Stokes
equations for the incompressible, viscous solvent.  

We require a numerical method to avoid unphysical contact between rigid
Janus particles. Optimization-based contact methods introduce
constraints, such as enforcing a non-positive space-time interference
volume (\cite{lu-rah-zor2017, Lukas19, yan-cor-mal-vee-she2020}). These
methods do not introduce stiffness, but do require solving potentially
expensive nonlinear complementarity problems at each time step.
Repulsion-based contact methods, which we employ, introduce an
artificial repulsion force that increases in strength as two particles
approach one another (\cite{glo-pan-hes-jos-per2001, fen-mic2004,
kab-qua-bir2018}). Strong repulsion forces can introduce numerical
stiffness, but with the presence of the hydrophobic forces, we maintain
contact-free suspensions with a relatively weak non-stiff repulsive
force.

The paper is organized as follows. In \S~\ref{sec:governing_eqs} we
present the formulation for the Janus particles in a viscous fluid in
the zero-Reynolds number regime (Figure \ref{fig:figure0}a). In
\S~\ref{sec:IEM} we extend the hybrid continuum model for a Janus
suspension to include the effects of a far-field flow via the mobility
problem formulation. In \S~\ref{results} we validate our model and
present simulation results for a single JP vesicle (Figure
\ref{fig:figure0}d) and for a pair of JP vesicles (Figure
\ref{fig:figure0}e) under various flowing conditions. Finally we provide
discussion and outlook for future directions in \S~\ref{conclusion}.


\section{Governing Equations\label{sec:governing_eqs}}
\subsection{\label{mobility}Mobility Problem}
The objective of this work is to study the hydrodynamics of JP vesicles
in background flows. We consider an $N_b$-many body collection of JP
suspended in a two-dimensional unbounded domain $\Omega$. The boundary
of each particle is denoted by $\Gamma_i$ so that $\bd \Omega = \Gamma_1
\cup \Gamma_2 \cup \cdots \cup \Gamma_{N_b}$ (Figure
\ref{fig:figure0}a). Assuming the inertial terms are negligible, the
governing equations are
\begin{alignat}{3}
  -\mu \Delta \uu + \nabla p &= \mathbf{0}, 
    && \xx \in \Omega, \\
  \nabla\cdot \uu &= 0, \qquad && \xx \in \Omega, \\
  \uu - \uu_\infty &\to \mathbf{0}, && |\xx| \to \infty,
\end{alignat}
where $\uu$ is the velocity, $p$ is the
pressure, $\uu_\infty$ is the background flow, and $\mu$ is the constant viscosity. 
Since each particle $\Gamma_i$ with centre $\aa_i$ is a rigid body, its velocity satisfies 
\begin{align}
  \vv(\xx) = \vv_i + \omega_i (\xx - \aa_i)^\perp, \quad 
    \xx \in \Gamma_i,
\end{align}
where $\vv_i$ is its translational velocity and $\omega_i$ is its
angular velocity. Here, $\langle x, y \rangle^{\perp} = \langle -y, x
\rangle$. Therefore, the no-slip boundary condition on each particle is
\begin{align}
\label{eq:rigid_bc}
  \uu(\xx) = \vv_i + \omega_i (\xx - \aa_i)^\perp, \quad
    \xx \in \Gamma_i.
\end{align}
To determine the translational and angular velocities of each particle,
we define imposed forces $\FF_i$ and torques $T_i$ acting on each
particle. Since the small particles are inertialess, force and torque
balance gives 
\begin{alignat}{2}
  \label{eq:force}
  \FF_i &- \int_{\Gamma_i} \ssigma \cdot \nnu \, \dif s = \mathbf{0},
  && i=1,\ldots,N_b,\\
  \label{eq:torque}
  T_i &- \int_{\Gamma_i} (\xx - \aa_i)^\perp \cdot 
    (\ssigma \cdot \nnu) \, \dif s = 0, \qquad && i=1,\ldots,N_b,
\end{alignat}
where $\ssigma = -p \mathbf{I} + \mu \left(\nabla \uu + \nabla \uu^T
\right)$ is the hydrodynamic stress tensor (pressure tensor) and
$\nnu_i$ is the particle outward normal. The process of finding the
translational and angular velocities given the forces and torques is
referred to as the mobility problem.

\subsection{Imposed Forces}
The imposed forces and torques contain two parts: hydrophobic attraction
and repulsion. The hydrophobic attraction potential was introduced
by~\cite{Fu20} and is responsible for forming particle aggregates that
sequester their hydrophobic surface regions (Figure \ref{fig:figure0}b).
We model hydrophobic attraction by solving the screened Laplace equation
boundary value problem
\begin{alignat}{2}
  \label{eq:SL}
-\rho^2 \Delta u + u &=0,            && \xx \in \Omega,\\
\label{eq:SLbc}
u(\xx) &= f_i(\xx),\qquad  && \xx \in \Gamma_i,\; i=1,\ldots,N_b, \\
\label{eq:SLff}
u &\to 0,                          &&|\xx| \to \infty,
\end{alignat}
where $0 \leq f_i \leq 1$ is a material label with $f_i = 0$,
respectively $1$, representing hydrophilic, respectively hydrophobic,
portions of the surface. We assume that both $f_i$ and $\Gamma_i$ are
smooth. The parameter $\rho > 0$ is the decay length of attraction. The
forces and torques of attraction are 
\begin{align}
  \label{eq:hydrophobicAttraction}
  \FF_i^{\text{hydro}} = \int_{\Gamma_i} {\bf T}\cdot \nnu \, \dif s, 
    \quad 
  T_i^{\text{hydro}} = \int_{\Gamma_i} (\xx - \aa_i)^{\perp} \cdot ({\bf T} \cdot \nnu) \dif s,
\end{align}
where
\begin{align}
  \label{eq:stress}
\mathbf{T}
= \gamma\rho^{-1}u^2 \mathbf{I} + 2\rho\gamma \left(\tfrac{1}{2}|\nabla
  u|^2 \mathbf{I} - \nabla u  \nabla u^T\right)
\end{align}
is the hydrophobic stress tensor and $\gamma > 0$ is
the interfacial
tension. 

The second part of the imposed forces and torques comes from repulsion
between proximal particles. Given a pair particles indexed with $i$ and
$j$, we find the two points $\tilde{\xx}_i \in \Gamma_i$ and
$\tilde{\xx}_j \in \Gamma_j$ that are closest to one another (Figure \ref{fig:figure0}a). We then
define the repulsion force and torque
\begin{align}
  \label{eq:REPULforce}
  \FF_i^{\text{repul}} &= \sum_{j \neq i} 
    \frac{\tilde{\xx}_i - \tilde{\xx}_j}
    {|\tilde{\xx}_i - \tilde{\xx}_j|} 
    P'(|\tilde{\xx}_i - \tilde{\xx}_j|), \\
  \label{eq:REPULtorque}
  T_i^{\text{repul}} &= \sum_{j \neq i} 
    (\tilde{\xx}_i - \aa_i)^{\perp} \cdot 
    \frac{\tilde{\xx}_i - \tilde{\xx}_j}
    {|\tilde{\xx}_i - \tilde{\xx}_j|} 
    P'(|\tilde{\xx}_i - \tilde{\xx}_j|).
\end{align}
The repulsion profile $P(r)$ is set to zero for distances $r$ larger than a
repulsion length scale $\rho_0$. As such,~\eqref{eq:REPULforce}
and~\eqref{eq:REPULtorque} ignore particles outside a
$\rho_0$-tubular neighborhood of $\Gamma_i$. For $0 \leq r < \rho_0,$ we use
$P(r) = M(1 - \sin(r/\rho_0))$ where $M$ is sufficiently large to
prevent particle collisions. Then, the total imposed force and torque
are
\begin{align}
  \FF_i = \FF_i^{\text{hydro}} + \FF_i^{\text{repul}},\quad
  T_i = T_i^{\text{hydro}} + T_i^{\text{repul}}, \qquad
  i=1,\ldots,N_b.
\end{align}

\subsection{Time Marching}
By solving the mobility problem, we obtain translational and angular
velocities of the $N_b$-body system. 
A second-order Adams-Bashforth
scheme updates the particle positions and orientations. By including the
repulsion force~\eqref{eq:REPULforce}, particle collisions are avoided even
when using a relatively large time step.


\section{Integral Equation Method}
\label{sec:IEM}
Computing the hydrophobic attraction potential and the particle forces
and torques requires the solution of elliptic partial differential
equations (PDEs) in an unbounded complex domain. We recast both these
PDEs as boundary integral equations (BIEs). We discretize each BIE at
$N$ points on each of the $N_b$ particles with a collocation method.
Integrals that are smooth are computed with the spectrally-accurate
trapezoid rule, and nearly-singular integrals, caused by close contact
between two particles, are computed with a high-order
interpolation-based quadrature rule (\cite{qua-bir2014}). After
discretizing and applying quadrature, the resulting linear system is
solved with matrix-free GMRES, and we guarantee that the number of GMRES
iterations is mesh-independent by using second-kind BIEs.

\subsection{HAP Boundary Integral Equation}
Similar to our previous work (\cite{Fu20}), we represent the HAP as a
double-layer potential
\begin{align}
  \label{eq:HAP_DLP}
  u(\xx) = \frac{1}{2\pi} \int_{\bd\Omega} \pderiv{}{\nnu_\yy}
    K_0 \left(\frac{|\xx - \yy|}{\rho}\right) \sigma(\yy) \, \dif s_\yy,
    \quad \xx \in \Omega,
\end{align}
where $K_0$ is the zeroth-order modified Bessel function of the first
kind and the integral is taken in the sense of principle value whenever $\xx \in \partial \Omega$. 
By requiring that the density function $\sigma$ satisfies the
second-kind integral equation
\begin{equation}
\label{eq:screenedSKIE}
  f(\xx) = \frac{1}{2}\sigma(\xx) + 
    \frac{1}{2\pi}\int_{\bd\Omega} \pderiv{}{\nnu_\yy}
    K_0 \left(\frac{|\xx - \yy|}{\rho}\right) \sigma(\yy) \, \dif s_\yy,
    \quad \xx \in \bd\Omega,
\end{equation}
the HAP double-layer potential~\eqref{eq:HAP_DLP} satisfies the screened
Laplace equation~\eqref{eq:SL}--\eqref{eq:SLff}. After
discretizing~\eqref{eq:screenedSKIE}, the result is an $NN_b \times
NN_b$ linear system that is solved with block-diagonal preconditioned
GMRES.

To calculate the hydrophobic force and torque, the gradient of the
double-layer potential~\eqref{eq:HAP_DLP} must be computed on the
boundary of each particle. The resulting integrands are singular, and
specialized quadrature would be necessary to approximate such integrals.
Alternatively, in Section~\ref{subsec:calculating_force}, we show how
the force and torque calculations can be expressed in terms of
non-singular integrals.

\subsection{Mobility Problem Boundary Integral Equation}
Following previous work of~\cite{Lukas19}, we use the velocity
representation of~\cite{pow-mir1987}. In particular, we write the
velocity as the sum of a double-layer potential and $N_b$-many Stokeslets and
rotlets
\begin{align}
  \label{eq:velocity}
  \uu(\xx) = \uu_\infty(\xx) + \DD[\eeta](\xx) + 
    \sum_{i=1}^{N_b} \left(\SS(\xx,\aa_i) \FF_i + 
    \RR(\xx,\aa_i) T_i\right), \quad \xx \in \Omega.
\end{align}
The double-layer potential is
\begin{align}
  \DD[\eeta](\xx) = \frac{1}{\pi} \int_{\bd\Omega} 
    \frac{\rr \cdot \nnu}{|\rr|^2} \frac{\rr \otimes \rr}{|\rr|^2}
    \eeta(\yy) \, \dif s_\yy,
\end{align}
where $\rr = \xx - \yy$ and $\rho = |\rr|$. The Stokeslet and rotlets
centred at $\aa_i$ are
\begin{align}
  \SS(\xx,\aa_i)\FF_i &= \frac{1}{4\pi} \left(-\log |\rr| + 
    \frac{\rr \otimes \rr}{|\rr|^2}\right) \FF_i, \\
  \RR(\xx,\aa_i)T_i &= \frac{1}{4\pi} \frac{\rr^\perp}{|\rr|^2} T_i,
\end{align}
respectively, where $\rr = \xx - \aa_i$. The
Stokeslet is torque-free and has force $\FF_i$ while the rotlet is
force-free and has torque $T_i$. Therefore, the
velocity~\eqref{eq:velocity} satisfies the total force~\eqref{eq:force}
and torque~\eqref{eq:torque} conditions if the double-layer potential
$\DD[\eeta]$ is force- and torque-free.  Matching the limit
of~\eqref{eq:velocity} with the rigid body motion, and imposing that
$\DD[\eeta]$ is force- and torque-free, the density function $\eeta$,
translational velocity $\vv_i$, and angular velocity $\omega_i$ satisfy
\begin{alignat}{3}
  \nonumber
  \vv_i + \omega_i (\xx - \aa_i)^\perp &= \uu_\infty(\xx)
    -\frac{1}{2} \eeta(\xx) + \DD[\eeta](\xx) \\
  \label{eq:SKIE}
    + \sum_{j=1}^{N_b} &
    \left(\SS(\xx,\aa_j) \FF_j + \RR(\xx,\aa_j) T_j\right),
    \quad &&\xx \in \Gamma_i,\: i=1,\ldots,N_b, \\
  \label{eq:mobility1}
  \int_{\bd\Gamma_i} \eeta \cdot \nnu_i \, \dif s &= {\bf 0}, 
  &&i = 1,\ldots,N_b, \\
  \label{eq:mobility2}
  \int_{\bd\Gamma_i} \eeta\times(\xx-\aa_i)^\perp \cdot \nnu_i \, \dif s &= 0,
  &&i = 1,\ldots,N_b.
\end{alignat}

After discretizing and applying appropriate quadrature rules, the result
is a $(2NN_b + 3N_b) \times (2NN_b + 3N_b)$ linear system that we solve
with block-diagonal preconditioned GMRES. Other BIE formulations of the
mobility problem use single-layer potentials (\cite{cor-gre-rac-vee2017,
rac-gre2016}) or a combination of single- and double-layer potentials
(\cite{cor-vee2018}).

We have validated our solver for~\eqref{eq:SKIE}--\eqref{eq:mobility2}
using a single elliptical particle suspended in a background shear flow
(see \S~\ref{sec:ves_in_shear}). Hydrophobic attraction and repulsion
are zero for a single particle (see \cite{Fu20} equation 2.13). The
angle of the ellipse's major axis coming from the integral equation
method agrees with the theoretical, Jeffery orbit time-course
(\cite{jef1922}). 
  
%


\subsection{Main Theoretical Result: Calculating the Hydrophobic Force}
\label{subsec:calculating_force}
Once \eqref{eq:screenedSKIE} has been solved for $\sigma$, we need to
evaluate the integrals~\eqref{eq:hydrophobicAttraction} which are the
HAP forces and torques. These integrals involve the
stress~\eqref{eq:stress} which contains a singular integral for the
gradient of the double-layer potential. To avoid singular integrals, we
first define
\begin{align}
  \label{eq:uidef}
  v_i(\xx) = u(\xx) - u_i(\xx),
\end{align}
where
\begin{align}
  u_i(\xx) &= \frac{1}{2\pi} \int_{\Gamma_i} \pderiv{}{\nnu_\yy}
    K_0 \left(\frac{|\xx - \yy|}{\rho}\right) \sigma(\yy) \, \dif s_\yy,
    \quad \xx \in \mathbb{R}^2.
\end{align}
That is, $v_i(\xx)$ is the double-layer potential~\eqref{eq:HAP_DLP}
with $\Gamma_i$ excluded from $\bd\Omega$. Having defined $v_i$, we prove
\begin{theorem}
\begin{align}
  \label{eq:recipforcetorque}
  \FF_i^{\mathrm{hydro}} = \int_{\Gamma_i} \JJ_i \,\dif s,\quad
  T_i^{\mathrm{hydro}}    = \int_{\Gamma_i} 
    (\xx - \aa_i)^{\perp} \cdot \JJ_i  \,\dif s,
\end{align}
where
\begin{align}
  \label{eq:jumpstress1}
  \JJ_{i} = 2\gamma\rho^{-1} \sigma v_i \nnu + 
    2\gamma\rho \frac{d\sigma}{ds} \frac{dv_i}{ds} \nnu -
    2\gamma\rho \frac{d\sigma}{ds} \frac{dv_i}{d\nnu} \ttau.
\end{align}
\end{theorem}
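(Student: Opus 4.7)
The plan is to exploit the quadratic (in $u$) structure of the stress tensor $\mathbf{T}[u]$ together with divergence-theorem arguments on the interior of particle $i$. Let $B[p,q]$ denote the symmetric bilinear polarization of the quadratic form $\mathbf{T}[\cdot]$, so that $\mathbf{T}[p+q]=\mathbf{T}[p]+\mathbf{T}[q]+B[p,q]$. A direct calculation from $\rho^2\Delta u=u$ shows $\nabla\cdot\mathbf{T}[u]=2\gamma\rho^{-1}(u-\rho^2\Delta u)\nabla u=0$, and by polarization $\nabla\cdot B[p,q]=0$ whenever $p$ and $q$ both satisfy the screened Laplace equation; both tensors are symmetric.

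Decomposing $u=u_i+v_i$ on the fluid side of $\Gamma_i$, bilinearity splits $\FF_i^{\mathrm{hydro}}$ into three boundary integrals, each of which I would analyze via the divergence theorem on an appropriate region. First, $v_i$ is a smooth screened-Laplace solution inside the particle, since its defining double layer is supported on $\bd\Omega\setminus\Gamma_i$; hence $\int_{\Gamma_i}\mathbf{T}[v_i]\cdot\nnu\,\dif s=0$. Second, $u_i$ is a smooth, exponentially-decaying screened-Laplace solution in the fluid exterior, so the divergence theorem on the annulus between $\Gamma_i$ and a circle of radius $R\to\infty$ gives $\int_{\Gamma_i}\mathbf{T}[u_i^+]\cdot\nnu\,\dif s=0$. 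Third, denoting by $u_i^-$ the DLP formula continued inside the particle (still a smooth screened-Laplace solution there), the same reasoning applied to $B[u_i^-,v_i]$ yields $\int_{\Gamma_i}B[u_i^-,v_i]\cdot\nnu\,\dif s=0$. Subtracting the third vanishing leaves $\FF_i^{\mathrm{hydro}}=\int_{\Gamma_i}B[u_i^+-u_i^-,v_i]\cdot\nnu\,\dif s$.

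To finish, apply the Yukawa double-layer jump relations on $\Gamma_i$: $u_i^+-u_i^-=\sigma$, $\bd_s(u_i^+-u_i^-)=d\sigma/ds$, and continuity of the normal derivative, $\bd_\nu(u_i^+-u_i^-)=0$. Substituting these values into $B\cdot\nnu$ expanded in the local $(\ttau,\nnu)$ frame produces exactly the integrand $\JJ_i$. The torque identity follows by the same three-step reduction, augmented by the observation that for any symmetric divergence-free tensor $\mathbf{T}$ the vector $\mathbf{T}\cdot(\xx-\aa_i)^\perp$ is itself divergence-free, since $\nabla(\xx-\aa_i)^\perp$ is antisymmetric; this supplies the moment counterpart of each vanishing statement. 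The main technical obstacle is confirming the normal-derivative continuity of the Yukawa DLP across $\Gamma_i$---this is inherited from the Laplace case because $K_0(|\xx-\yy|/\rho)$ and $-\log|\xx-\yy|$ differ only by a smoother remainder, but needs care with the paper's smoothness hypotheses on $\Gamma_i$ and $\sigma$; without it the last substitution would pick up extra jump terms incompatible with the clean form of $\JJ_i$.
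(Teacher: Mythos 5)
Your proposal is correct and follows essentially the same route as the paper's proof: the same polarization of the quadratic stress into $\mathbf{T}[u_i]$, $\mathbf{T}[v_i]$, and the cross term (the paper's $\mathbf{T}_2=\mathbf{S}(u_i,v_i)+\mathbf{S}(v_i,u_i)$ is your $B[u_i,v_i]$), the same three divergence-theorem annihilations on the interior and exterior regions, and the same reduction to the jump $\jump{\mathbf{T}_2\nnu}$ evaluated with the Yukawa double-layer jump relations $\jump{u_i}=\sigma$, $\jump{\nabla u_i}=\tfrac{d\sigma}{ds}\ttau$, $\jump{v_i}=0$, $\jump{\nabla v_i}=\mathbf{0}$. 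The one point you flag as a technical obstacle, continuity of the normal derivative of the screened-Laplace double layer, is exactly what the paper disposes of by citing the standard jump relations (\cite{KlBaGrON13}), so no gap remains.
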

The symbols $\ttau$ and $\frac{d}{ds}$ are the unit tangent and arc
length derivative for $\Gamma_i$, respectively (Figure
\ref{fig:figure0}a.) The result is valid for any smooth particle shape
or boundary condition. The advantage of
using~\eqref{eq:recipforcetorque} over
using~\eqref{eq:hydrophobicAttraction} is that the components $\sigma$,
$v_i$, $\frac{d\sigma}{ds}$, and $\frac{dv_i}{ds}$ of $\JJ_i$ are smooth
functions, whereas the components of \eqref{eq:stress} are singular
integrals. 


To prove \eqref{eq:recipforcetorque}, let 
\begin{align*}
  {\bf T} &=
    \mathbf{S}(u_i,u_i)
  +(\mathbf{S}(u_i,v_i)
  +\mathbf{S}(v_i,u_i))
  +\mathbf{S}(v_i,v_i) \\
  &= {\bf T}_1 + {\bf T}_2 + {\bf T}_3
\end{align*}
where we introduce the bilinear form
\begin{equation}
\label{eq:Tsplit}
\mathbf{S}(u,v)
=  \gamma\rho^{-1} uv {\bf I} + \gamma\rho \nabla u \cdot \nabla v {\bf I} - 2 \gamma \rho \nabla u \nabla v^T .
\end{equation}
Using the fact that $u$,  $u_i$, and $v_i$ solve the screened Laplace
equation~\eqref{eq:SL},  and that $\mathbf{T}$, $\mathbf{T}_j$, $j = 1,
2, 3$ are symmetric, it is straightforward to verify that 
  \begin{equation}
    \label{eq:decompdivfree}
    \nabla \cdot {\bf T}_j = 0, \quad
    \nabla \cdot ((\xx-\aa_i)^{\perp} \cdot {\bf T}_j) = 0, \quad j = 1, 2, 3.
  \end{equation}

Let $U_i$ be the interior of the particle indexed by $i$. For $\xx_0
\in \Gamma_i$ and an arbitrary function $g(\xx)$, the notation
\begin{align}
  \jump{g}(\xx_0) = \lim_{\substack{\xx \to \xx_0 \\ \xx \in U_i^c}}g(\xx)  - 
                    \lim_{\substack{\xx \to \xx_0 \\ \xx \in U_i}}g(\xx),
\end{align}
denotes the jump of the limits of $g(\xx)$ taken from the outside to
the inside of $\Gamma_i$.
\begin{lemma}
\begin{align}
  \label{eq:prejump}
  \FF_i^{\mathrm{hydro}} = \int_{\Gamma_i} \jump{{\bf T}_2  \nnu}  \, \dif s,\quad
  T_i^{\mathrm{hydro}} = \int_{\Gamma_i} (\xx - \aa_i)^{\perp} \cdot \jump{{\bf T}_2 \nnu} \, \dif s.
\end{align}
\end{lemma}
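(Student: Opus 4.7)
The plan is to exploit the decomposition $\mathbf{T} = \mathbf{T}_1 + \mathbf{T}_2 + \mathbf{T}_3$ already introduced, together with the pointwise identities \eqref{eq:decompdivfree}, and to show that the only piece surviving the exterior boundary integral is the jump of $\mathbf{T}_2$. I would first record three regularity facts needed later: (i) $u_i$, the single double-layer potential supported on $\Gamma_i$ alone, is smooth on both $U_i$ and $U_i^c$ separately (jumping only across $\Gamma_i$) and decays exponentially at spatial infinity because the kernel is $K_0(|\xx-\yy|/\rho)$; (ii) $v_i = \sum_{j\ne i} u_j$, viewed as a function on all of $\mathbb{R}^2$, is smooth in a neighbourhood of $\Gamma_i$ since each contributing $u_j$ is singular only on $\Gamma_j$; (iii) consequently, inside $U_i$ all three bilinear combinations $\mathbf{S}(u_i,u_i)$, $\mathbf{S}(u_i,v_i)$, $\mathbf{S}(v_i,v_i)$ are smooth, while in $U_i^c$ the combination $\mathbf{S}(u_i,u_i)$ is smooth.

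The core of the argument is three applications of the divergence theorem, each combined with \eqref{eq:decompdivfree}. For $\mathbf{T}_1$, I would integrate over $U_i^c \cap B_R$ with $B_R$ a large ball; the volume integral vanishes by \eqref{eq:decompdivfree}, the contribution from $\bd B_R$ tends to zero as $R \to \infty$ by the exponential decay, and what remains is $-\int_{\Gamma_i} \mathbf{T}_1^+ \nnu \, \dif s = 0$ (the sign from the outward normal of $U_i^c$ being $-\nnu$). For $\mathbf{T}_3$, continuity of $v_i$ across $\Gamma_i$ gives $\mathbf{T}_3^+ = \mathbf{T}_3^-$ on $\Gamma_i$, while the divergence theorem on $U_i$ yields $\int_{\Gamma_i} \mathbf{T}_3^- \nnu \, \dif s = 0$. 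For $\mathbf{T}_2$, the divergence theorem applied in the smooth region $U_i$ produces $\int_{\Gamma_i} \mathbf{T}_2^- \nnu \, \dif s = 0$. Assembling these three cancellations,
\[
\FF_i^{\mathrm{hydro}} = \int_{\Gamma_i} \mathbf{T}^+ \nnu \, \dif s = \int_{\Gamma_i} \mathbf{T}_2^+ \nnu \, \dif s = \int_{\Gamma_i} \bigl(\mathbf{T}_2^+ - \mathbf{T}_2^-\bigr)\nnu\, \dif s = \int_{\Gamma_i} \jump{\mathbf{T}_2 \nnu}\, \dif s,
\]
and the identical scheme, using the second identity of \eqref{eq:decompdivfree}, gives the corresponding formula for $T_i^{\mathrm{hydro}}$ with $(\xx-\aa_i)^{\perp}$ dotted into $\jump{\mathbf{T}_2 \nnu}$.

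The main delicacy I expect is the decay argument at infinity, which is essential for killing the $\mathbf{T}_1$ contribution and, in the torque case, for killing the $(\xx-\aa_i)^{\perp}$-weighted boundary integral on $\bd B_R$. This rests on the exponential decay of $K_0(r/\rho)$, strong enough to dominate the linear growth of $|\xx-\aa_i|$. A secondary subtlety is keeping track of which side of $\Gamma_i$ we are on: $\mathbf{T}_1$ and $\mathbf{T}_2$ are genuinely discontinuous there, so it must be emphasised that the $-$ (interior) integrals we set to zero are smooth integrals inside $U_i$, whereas the $+$ (exterior) values are the ones that assemble into $\FF_i^{\mathrm{hydro}}$ via \eqref{eq:hydrophobicAttraction}. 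Once these points are made precise, the statement follows purely from the divergence-free identities already supplied.
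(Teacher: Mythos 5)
Your proof is correct and follows essentially the same route as the paper: decompose ${\bf T}$ into ${\bf T}_1+{\bf T}_2+{\bf T}_3$, eliminate ${\bf T}_1$ by the divergence theorem on the exterior of $U_i$, eliminate ${\bf T}_3$ and the interior trace of ${\bf T}_2$ by the divergence theorem on $U_i$, and identify what remains with the jump of ${\bf T}_2\nnu$. Your added care about the exponential decay of $K_0$ at infinity and about which one-sided trace enters each integral only makes explicit two points the paper leaves implicit.
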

\begin{proof}
To show~\eqref{eq:prejump}, we expand \eqref{eq:hydrophobicAttraction} as
\begin{align}
  \FF_i = \int_{\Gamma_i} {\bf T}_1\nnu + 
    {\bf T}_2\nnu + {\bf T}_3\nnu\,\dif s.
\end{align}
By~\eqref{eq:uidef} and~\eqref{eq:decompdivfree}, we have that $u_i$ is
smooth and $\nabla \cdot {\bf T}_1 = \mathbf{0}$ in $\mathbb{R}^n
\setminus U_i$. Similarly, $v_i = u - u_i$ is smooth and $\nabla \cdot
{\bf T}_3 = \mathbf{0}$ in $U_i$. By the divergence theorem,  
\begin{align}
  \int_{\Gamma_i}  {\bf T}_1\nnu \,\dif s
  = -\int_{\mathbb{R}^n \setminus U_i} \nabla \cdot {\bf T}_1 \,\dif \xx = \mathbf{0},\quad
    \int_{\Gamma_i}  {\bf T}_3 \nnu\,\dif s
  = \int_{U_i} \nabla \cdot {\bf T}_3 \,\dif \xx = \mathbf{0}.
\end{align}
Finally, $u_i$ and $v_i$ are smooth and $\nabla \cdot {\bf T}_2= 0$ in $U_i$. This gives
\begin{align}
  \mathbf{0} = \int_{U_i} \nabla \cdot {\bf T}_2 \, \dif \xx = -\int_{\Gamma_i}  
    ({\bf T}_2 \nnu)^-\,\dif s,
\end{align}
where the superscript denotes the limit taken from in $U_i$.  
Combining the above gives the first equation in \eqref{eq:prejump}.
The argument for the second equation in \eqref{eq:prejump} is identical. 
\end{proof}

We have the following jump relations for \eqref{eq:uidef}: 
\begin{equation}
\label{eq:DLjump}
\jump{u_i} = \sigma, \quad
\jump{v_i} = 0, \quad
\jump{\nabla u_i} = \frac{\dif \sigma}{\dif s}\ttau,\quad
\jump{\nabla v_i} = \mathbf{0},
\end{equation}
on $\Gamma_i$ (see, e.g. \cite{KlBaGrON13}). Therefore,
\begin{align*}
  \jump{{\bf T}_2\nnu}   &= \jump{\mathbf{S}(u_i,v_i)\nnu  +\mathbf{S}(v_i,u_i)\nnu} \\
  &= \jump{( 2\gamma\rho^{-1} u_i v_i \mathbf{I} + 2\gamma\rho \nabla u_i \cdot \nabla v_i \mathbf{I} 
- 2\gamma\rho \nabla u_i  \nabla v_i^T - 2\gamma\rho \nabla v_i \nabla u_i^T)  \nnu}\\
&= 2\gamma\rho^{-1} \sigma v_i \nnu + 2\gamma\rho \frac{d\sigma }{ds}\frac{dv_i }{ds} \nnu
- 2\gamma\rho \ \frac{d\sigma }{ds} \frac{dv_i }{d\nu} \ttau = \JJ_i.
\end{align*}
Combining this with~\eqref{eq:prejump} gives~\eqref{eq:recipforcetorque}
as required.

\section{\label{results}Numerical Results}

\begin{figure}
\centering
\includegraphics[width=11.5cm]{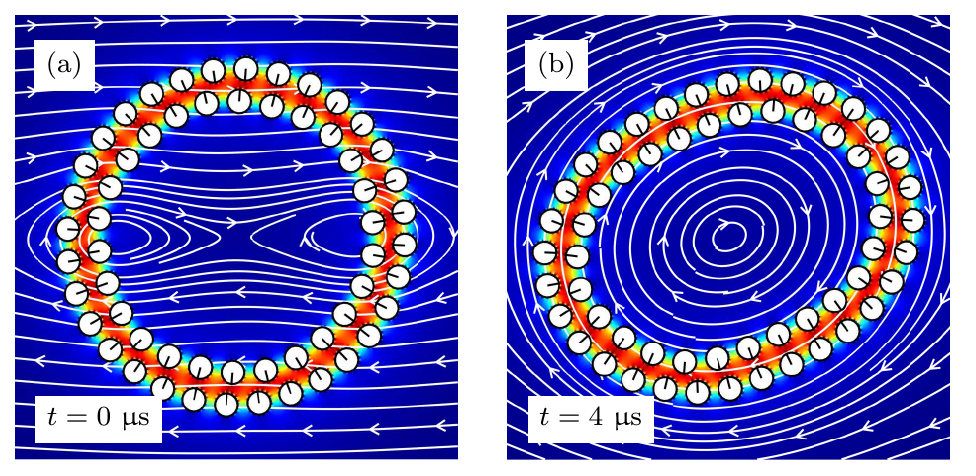}
  \caption{\label{figure3} 
  JP vesicles undergo tank-treading in shear flow. 
  In panel (a),  the initial, 58-body vesicle is circular and the black arrows point in the direction of the
  hydrophobic side of the JP.
  In panel (b), the JP suspension rotates and deforms with the shear flow.
  The color map is for the solution $u$ of \eqref{eq:SL}. 
  The white curves are the streamlines of $\uu.$ 
   The shear rate is $\chi=0.0025$.}
\end{figure}

\subsection{Model Parameters}
\cite{Fu20} studied physical quantities for static JP vesicles for
various particle shapes. In the present study, we fix the particle shape
and vary the background flows. Specifically, the particles are circular
disks with diameter $l_0 = 2.5$~nm. The boundary
conditions~\eqref{eq:SLbc} are $f_i(\xx) = \tfrac{1}{2}(1 + \cos
\theta_i)$ where $\theta_i$ is the angle between $\xx - \aa_i$ and
$\dd_i$, where the vectors $\aa_i$ and $\dd_i$ are the particle centre
and director respectively. The particle diameter is the thickness of
monolayers and the director points in the direction of the hydrophobic
side of the JP. 

We use $\rho = 2 l_0$ for decay length, $\rho_0 = 0.2l_0$ for repulsion
length, $M=4.0$~$k_BT$ for repulsion strength, $\gamma = \text{4.1 pN
nm}^{-1}$ for interfacial tension, and $\mu = \text{1 cP} = \text{1 pN
ns nm}^{-2}$ for viscosity of room-temperature water. We
nondimensionalize the problem through the change of variables 
$t \mapsto t \text{ ns}$,
$\xx \mapsto \xx \text{ nm}$, 
$\uu \mapsto \uu \text{ nm ns}^{-1}$, and 
$p \mapsto p \text{ pN nm}^{-2}$.
The numerical time step size is $\Delta t=0.2$.

To reach consistent simulation outcomes, we first solve for a baseline
JP vesicles that is suitably close to equilibrium. We start with an
assumed configuration of $N=58$ JP in the form of two, circular,
apposing monolayers of about $8$ nm in radius. The norms of the translational and
rotational velocities vanish exponentially with an approximate decay
rate $\sim 4.6$ $\upmu$s$^{-1}$. An equilibrium configuration is
therefore rapidly attained. This equilibrium configuration serves as the
initial data in the subsequent background flow simulations.

\subsection{Tank-Treading Vesicles}
\begin{figure}
\begin{center}
\includegraphics[width=13.5cm]{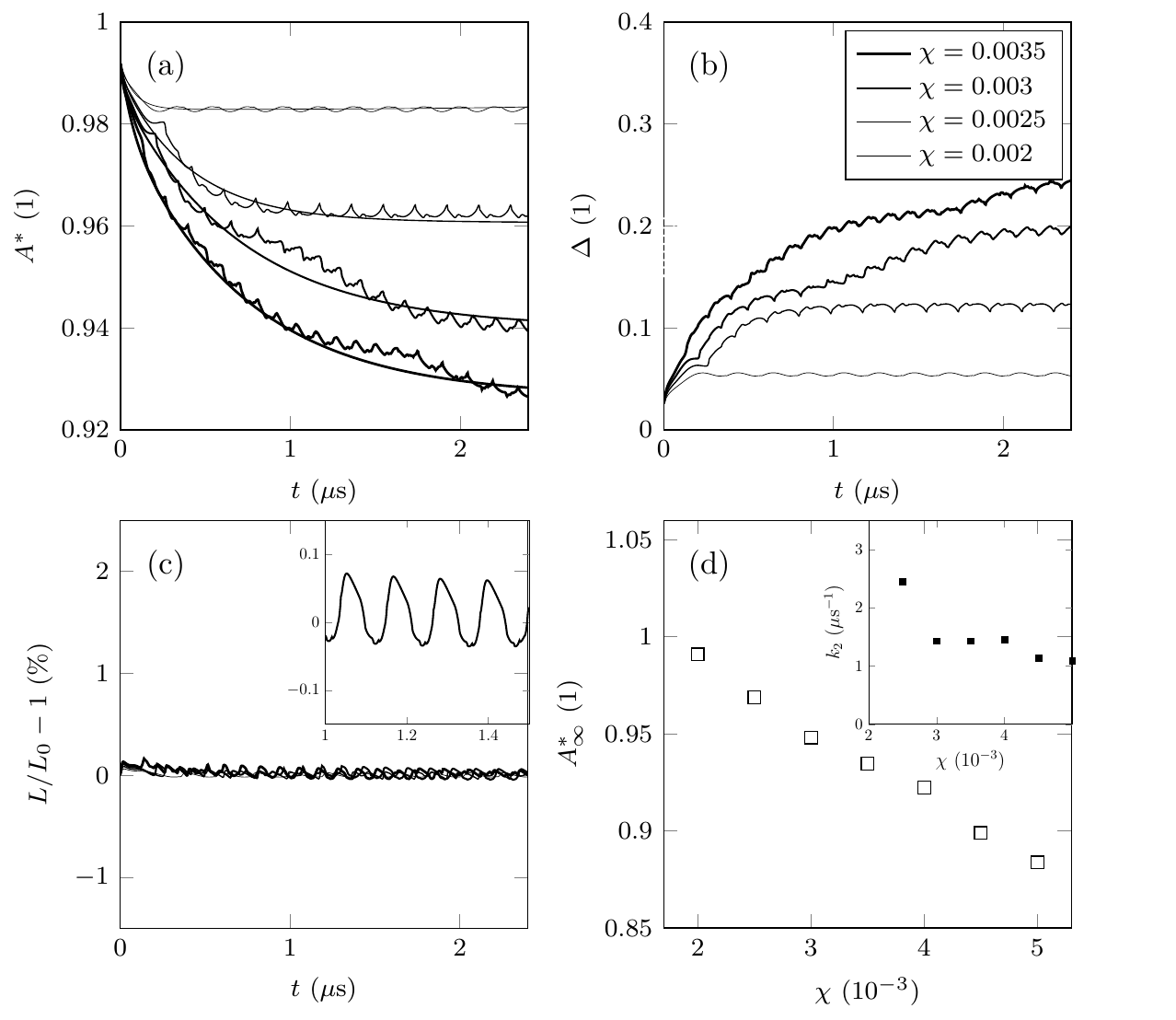}
\end{center} 
  \caption{\label{figure4} JP vesicles lose enclosed area but conserve
  length in a shear flow. Panel (a) gives the reduced area over time for
  four different shear rates. The monotonic curves are a two-exponential
  best fit. There is a commensurate increase in excess length (panel
  (b)), but the arc length of the mid plane curve is more or less
  constant for all shear rates (panel (c)). The inset in panel (c) is
  for shear rate 0.003. The legend in panel (b) applies to panels
  (a)--(c). Panel (d) plots the steady-state reduced area and the decay
  rate $k_2$ (inset) coming from the fitting data in panel (a).}
\end{figure}

\subsubsection{Vesicle in a Shear Flow}
\label{sec:ves_in_shear}
Our simulation studies begin by showing, for the first time, that a JP
suspension with hydrophobic attraction behaves as a tank-treading
vesicle (\cite{Finken08, Shaqfeh11}). The centroid of the baseline,
$58$-body JP vesicle lies at the origin and the background shear flow 
\begin{align}
  \uu_{\infty}(\xx) = \dot\gamma (\xx \cdot \mathbf{e}_y) \mathbf{e}_x,
\end{align} 
is applied for shear rate $\dot\gamma$, and orthogonal unit vectors
$\mathbf{e}_x$ and $\mathbf{e}_y$ for the horizontal and vertical
directions, respectively. We use the dimensionless shear rate $\chi =
\dot \gamma$ (s$^{-1}$) ns.

\begin{figure}
\begin{center}
  \includegraphics[width=11cm]{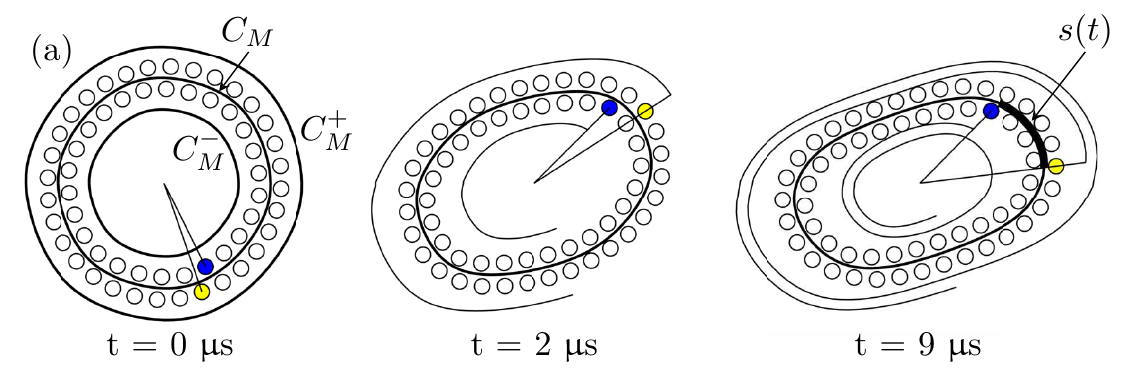}\\
  \includegraphics[width=11cm]{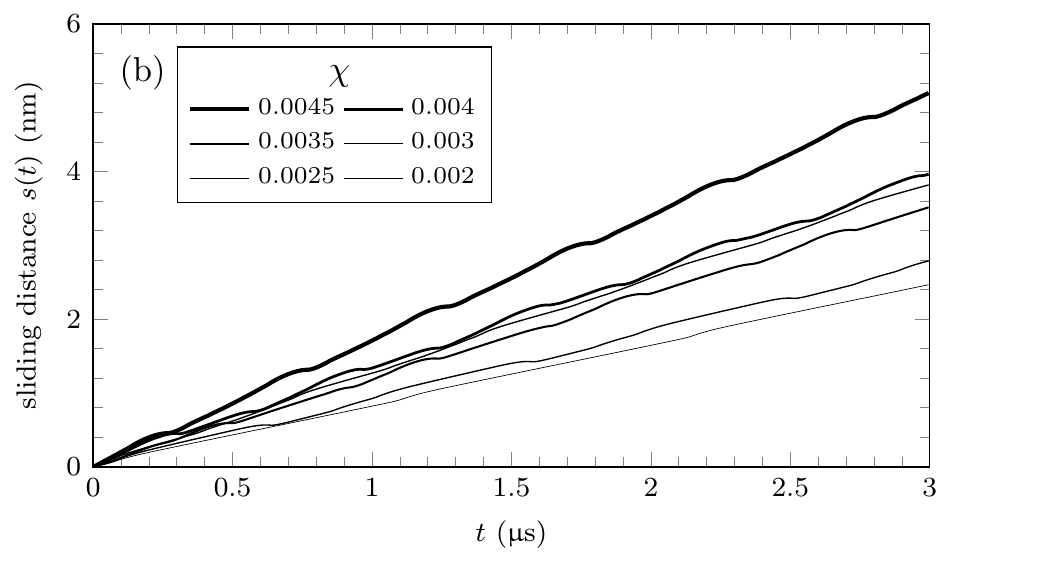}
\end{center} 
\caption{\label{figure5} Inter-monolayer slip is present in the JP
  vesicle tank-treading motion at all shear rates. Panel (a) tracks a
  pair of particles (blue and yellow). The shear rate is $\chi=0.005$
  and in the right figure, the yellow particle will complete two and a
  quarter revolutions in the same time the blue particle completes two
  revolutions. The curves in panel (b) are the distance the outer
  leaflet has slid past the inner leaflet. The slopes of the curves
  give the slip velocity. With the exception of $\chi = 0.0035$, the
  slip velocity is generally monotonic in the shear rate.}
\end{figure}

Figure~\ref{figure3} shows snapshots of the JP vesicle in the shear flow
with $\chi=0.0025$. Under the background flow, the rigid-body boundary
condition~\eqref{eq:rigid_bc} causes the particle suspension to
elongate. This perturbation disrupts the preferred particle orientations
and exposes the hydrophobic core to bulk water (Figure~\ref{figure3},
red region). In response, hydrophobic
attraction~\eqref{eq:hydrophobicAttraction} causes the particles to
reorient and form a somewhat elliptically-shaped suspension. Panel (a)
shows the initial configuration and panel (b) shows the later,
fully-formed, clockwise tank-treading motion. The JP suspension
maintains its bilayer structure throughout the simulation. 

To extract physical quantities, let $A^* = 4\pi A/L^2$ be the reduced
area and $\Delta=L/\sqrt{A/\pi} - 2\pi$ the excess length of the bilayer
structure (\cite{Finken08}). Here, $A$ is the enclosed area and $L$ is
the total length of the JP vesicle (Figure \ref{fig:figure0}a).
Figure~\ref{figure4} shows the evolution of the area and length for
various shear rates. Panels (a) and (b) show that $A^*$ decreases and
$\Delta$ increases with time, respectively, and that the rate of
decrease/increase grows with shear rate. The total arc length, however,
remains constant for all time for all four shear rates (panel (c)). We
conclude that the JP vesicle loses area and that the bilayer behaves as
a permeable membrane. 

Since the starting configuration is nearly circular, the reduced area
decreases from an initial value close to $1$ and tends to a steady-state
value $A^*_{\infty}$. The data in panel (a) are fit to the model $A^* =
(A^*_{\infty}-a_1-a_2) + a_1 \exp(-k_1t) + a_2 \exp(-k_2t),$ $0 < k_2 <
k_1$. Panel (d) shows that $A^*_{\infty}$ decreases with the shear rate.
The JP vesicle achieves a steady-state reduced area earlier when the
shear rate is low (inset), but the decay rate $k_2$ is more or less
constant for higher shearer rates. 

The oscillations in the data of Figure~\ref{figure4} are due to the
granularity of the JP vesicle. The inset of Figure~\ref{figure4}c zooms
in on the arc length data for the shear rate $\chi = 0.003$. It shows
that the oscillations are smooth and well-resolved by our second-order
Adams-Bashforth scheme.

We point out that the range of values for $\chi$ where we measured for
tank-treading correspond to shear rates $\dot \gamma = \mathcal{O}(10^6$
s$^{-1}$) which gives fluid velocities $\mathcal{O}($m s$^{-1}$$)$ in
the vicinity of the vesicle. While large, these orders of magnitude are identical to
ones used in prior MD studies (\cite{Brandner2019}) and are a
consequence of the fact that larger shear rates are required to produce
the viscous stresses needed to appreciably deform smaller vesicles. 

\subsubsection{Inter-Monolayer Friction}
We observe inter-monolayer slip in the tank-treading, JP vesicle at all
shear rates. Since they are not bound, the two leaflets of the vesicle
are able to slide past one another. 
Monolayer slip effects have been incorporated in continuum models (\cite{sch-vla-mik2010}).
In the present setting, slip is limited by viscous
friction of the aqueous gaps between particles and by the constant
unbinding and binding of particles pairs in apposing leaflets.

Figure~\ref{figure5}a illustrates inter-monolayer slip by tracking the distances
traveled by a pair of particles along the midplane curve. In the left image, the blue and yellow
particle lie next to each other. In the right-most panel, the yellow
particle has traveled farther than the blue particle. This suggests that
the outer tangential velocity, obtained by projecting the velocity of the
outer leaflet onto the midplane curve, 
is larger than that of the inner leaflet. 

From the data, we obtain an inter-monolayer friction coefficient 
\begin{align}
  b =  \frac{\langle F \rangle}{\langle L   U \rangle} ,
\end{align}
where $F$ is the tangential force jump, $L$ is the length of the
midplane $C_M$ (Figure~\ref{figure5}a), and $U$ is the slip velocity.
The time average $\langle \cdot \rangle$ is necessary to avoid division by zero
whenever slip velocity vanishes.  

\begin{table}
\caption{Friction Coefficients}
\centering
\begin{tabular}{c c c c c c c c }
 $\dot\gamma$  (ns$^{-1}$) & 0.0020   &  0.0025 &  0.0030 &  0.0035 &  0.0040 & 0.0045 & 0.0050  \\
\hline                    
$b$ (pN ns nm$^{-3}$)    & 0.43 & 1.19 &   0.42  & 0.69 &   0.97  &   0.80  &  1.07 \\ 
\hline    
\end{tabular} 
\label{table1}
\end{table}

The tangential force jump $F$ equals the 
tangential shear force on the outer
leaflet minus the tangential shear force on the inner leaflet. 
To calculate  $F$, we first let 
\begin{align}
F_h = \int_{\mathcal{C}_M^+} 
  \ttau \cdot \sigma \cdot \nnu \,\dif s - \int_{\mathcal{C}_M^-} 
 \ttau \cdot \sigma \cdot \nnu \,\dif s 
\end{align}
where $C_M^+$ and $C_M^-$ are target curves obtained by projecting the
midplane curve $C_M$  a distance $h$ outward,
respectively inward, along its unit normal vector field (Figure~\ref{figure5}a). We sample $F_h$ for 
$h = \pm 1.3, \pm 1.5,  \pm 1.7, \pm 1.9$ times the particle radius and define $F$ by
extrapolating to zero distance. This avoids integrating along a curve passing directly through the
particles.   

To calculate $U$, we let $u_+(\mathbf{x})$ and $u_-(\mathbf{x})$ 
be the tangential velocity of the outer, respectively inner, leaflet.  Then 
\begin{equation}
U = \frac{1}{L}\int_{C_M}u_+(\mathbf{x})(1 - \delta \kappa) - v_-(\mathbf{x})(1 +  \delta \kappa) \, ds,
\end{equation}
where $ \delta $ is the distance from the leaflet centres to the midplane and $\kappa$ is the curvature. 
The factors $\pm  \delta \kappa$ are needed to project the leaflet velocities, defined on the particle centres,
onto the midplane curve. 
Finally, the function $s(t) = \int_0^t dU $ records the distance one leaflet has
slid past the other. Figure~\ref{figure5}b plots the sliding distance for various shear
rates. 

Table~\ref{table1} provides $b = 0.79 \pm 0.3$ pN ns nm$^{-3}$  
over a range of shear rates
which is in good quantitative agreement with values previously reported in the literature.
Atomistic studies have also considered inter-monolayer slip in lipid bilayers. 
\cite{WuoEd06} and \cite{denOtter2007} reported $b = 0.7 \times 10^6$ Pa m$^{-1}$ s  
$=0.7$ pN ns nm$^{-3}$  
and $b = 2.4$  pN ns nm$^{-3}$ for DPPC membranes simulated by MD, respectively. 
Using a more recent version of the Martini force field, \cite{Zgorski2019} 
gives $b = 5.5$ pN ns nm$^{-3}$ for shear rates 0.4 ns$^{-1}$ and higher.
It is understandable that there is uncertainty in the friction coefficients of
Table~\ref{table1}.  The scatter in our data, however, is fully consistent with that calculated
from MD simulations, c.f. the transient rise in values of Table~\ref{table1} and in \cite{Zgorski2019}, Figure 10
for low shear rates. 
\subsubsection{Membrane Ruptures}
\begin{figure}
\centering
\includegraphics[width=11.32cm]{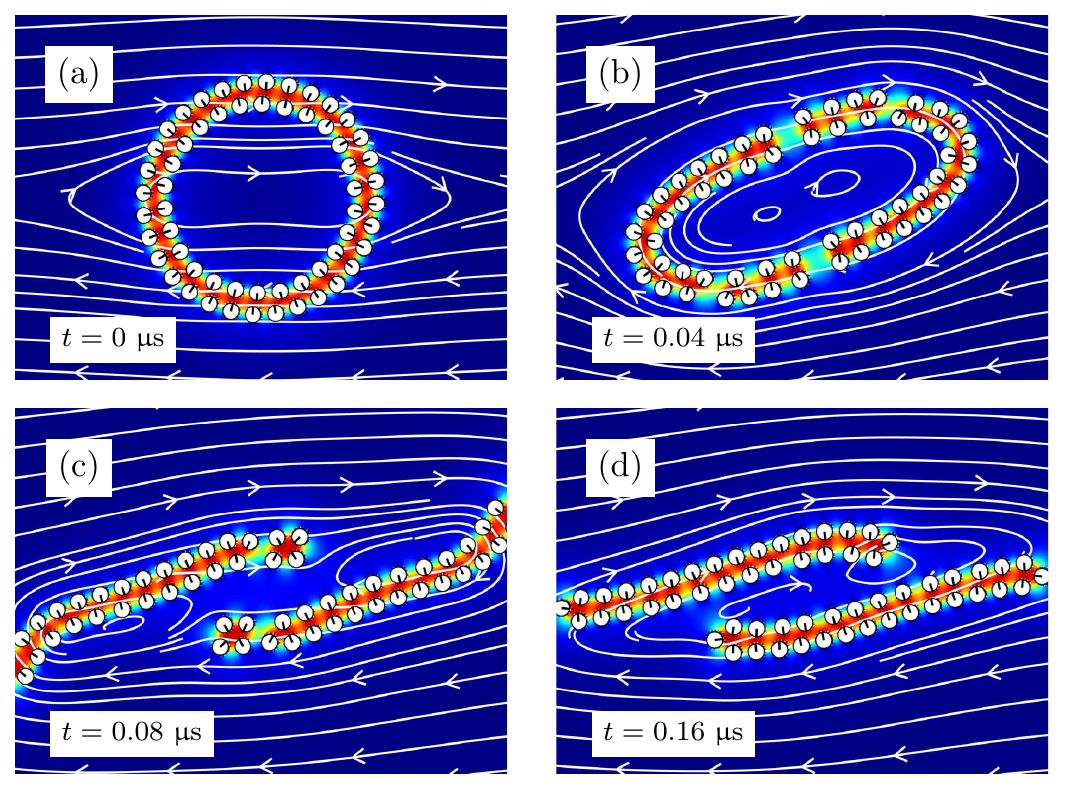}
\\
  \caption{\label{figure8} Tank-treading leads to membrane rupture for
  large shear rates.} 
\end{figure}
A temporary fissure or a complete membrane rupture can occur at large
shear rates. Figure~\ref{figure8} demonstrates how a vesicle can rupture
when suspended in a shear flow. For $\chi = 0.1$, starting with a
circular shape (Figure~\ref{figure8}a), the vesicle is stretched by the
background flow and fissures appear in the bilayer structure in multiple
locations (Figure~\ref{figure8}b). In Figure~\ref{figure8}c, the
ruptured vesicles form two planar micelles which are eventually carried
off by the flow (Figure~\ref{figure8}d).

\subsubsection{Vesicle in a Parabolic Flow}
Finally, we consider the parabolic background flow
\begin{align}
  \uu_\infty = v_{max}\left[ 1 - \left( 
    \frac{\xx \cdot \mathbf{e}_y}{wR_0}\right)^2
    \right]\mathbf{e}_x,
\end{align}
where $v_{max}$ is the flow strength and $w$ determines the shape of the
flow. The parameter $R_0$ is the radius of the JP vesicle at $t=0$ and
$w$ sets the width of the profile. \cite{Kaoui09, cou-kao-pod-mis2008,
dan-vla-mis2009} have shown that the behaviour of a vesicle in this
unbounded flow includes vertical migration, and depending on the flow
rate and reduced area, the steady-state shape can be either a symmetric
parachute or an asymmetric slipper.


Figure~\ref{figure6} shows four configurations for one specific case
where the centroid of the JP vesicle is initially placed slightly above
the $x$-axis. We have marked a pair of particles blue and yellow in the
inner and outer leaflets, respectively, and observe that the deformed JP
vesicle (Figure~\ref{figure6}, $t = 12~\upmu$s) has a counterclockwise
movement and the shape of the vesicle approaches an asymmetric slipper
shape. For this test, the reduced area in the final configuration is
approximately $0.9$ which matches the previous numerical tests
in~\cite{Kaoui09} where a slipper-like shape occurs when the flow
velocity is weak and the reduced area is large. The flow causes the
vesicle, which is initially placed above the axis, to drift downward
where it reaches a steady height of about $1$ nm. 

\begin{figure}
\centering
\includegraphics[width=\textwidth]{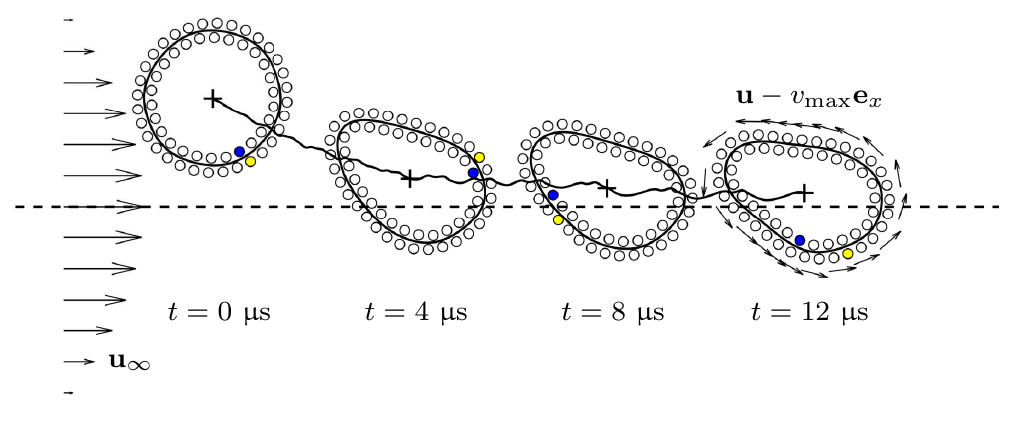}
  \caption{\label{figure6} A JP vesicle placed in a parabolic flow
  drifts toward the central axis. It assumes a slipper shape and
  undergoes tank-treading motion like in the shear flow case. The arrows
  on the left illustrate the background flow and the arrows on the right
  show the velocity relative to the vesicle's moving frame. The thick
  black curve plots the distance from the axis as a function of time
  from left to right. The parameters are $v_{max} = 8$~nm ns$^{-1}$,
  $w=10$, and $R_0=20$ nm.}
\end{figure}


\subsection{Stretching and Permeability}
\begin{figure}
\centering
\includegraphics[width=\textwidth]{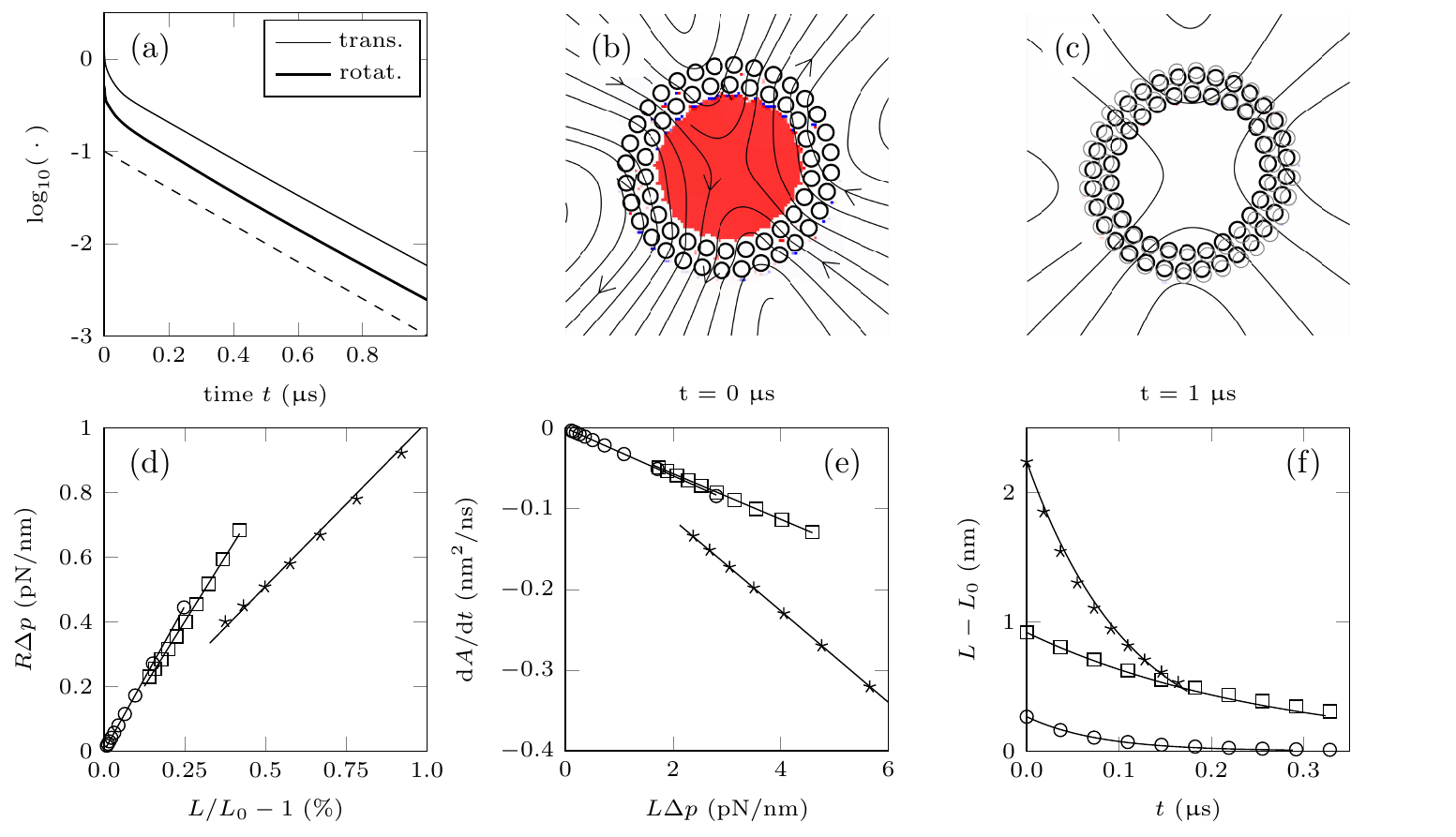}
\caption{\label{figure2} In JP vesicles, tension and pressure relax as
  fluid is expelled through the particle interstices. In panel (a), the
  thin curve plots the mean of the norm of the particle translational
  velocities. The thick curve plots the mean of the absolute value of
  the particle angular velocities. At $t = 0$ $\upmu$s, the vesicle has
  a spatially constant internal pressure 0.05~pN nm$^{-2}$ (red color,
  panel (b)). In panel (c), pressure has dissipated and the vesicle has
  moved from its initial configuration (gray circles), to the
  equilibrium configuration (black circles). The data in panels (d) and
  (e) support the linear relationship expressed by
  \eqref{eq:LaplacePressure} and \eqref{eq:perm}. Circles are for a
  vesicle with 17 nm initial radius; square are for a 34 nm initial
  radius. Asterisks are for a 18 nm initial radius but with smaller,
  1.25 nm diameter particles. In panel (f), the theoretical time course
  \eqref{eq:perm_solution} overlaps the data using constants $K_A$ and
  $P$ derived from panels (d) and (e). }
\end{figure}
Finally, we show that the JP vesicle behaves as a permeable membrane and the inextensibility comes
about due to a large stretching modulus.
Figures~\ref{figure2}b shows an initial, non-equilibrium JP
vesicle suspended in a quiescent flow $\uu_{\infty} = 0$. The color map
plots the pressure $p$. The red color in Figures~\ref{figure2}b shows a
spatially constant, positive internal pressure (0.05~pN nm$^{-2}$) and the white
shows a spatially constant zero, external pressure. There is some fluid flow and
the pressure vanishes as the configuration tends toward the equilibrium
state (Figures~\ref{figure2}c).

What could be the source of this drop in pressure? In membrane
continuum  mechanics, small changes in surface area give rise to a membrane tension
$\tilde \gamma = K_A(A/A_0 - 1)$ where $A$ is the membrane surface area
and $A_0$ is the reference surface area. The area modulus $K_A$ of
bilayers is about $240$ pN nm$^{-1}$ (\cite{NaTr00}). In the
two-dimensional vesicles, the tension becomes
\begin{align}
\label{eq:stretch}
\tilde \gamma = K_A\left(\frac{L}{L_0} - 1 \right),
\end{align}
where $L$ and $L_0$ are the vesicle arc length and resting length,
respectively. Moreover, stretched, circular vesicles has a Laplace pressure 
\begin{align}
\label{eq:LaplacePressure}
\Delta p = \frac{\tilde \gamma}{R},
\end{align}
where $\Delta p$ is the difference in internal pressure to the pressure
at infinity and $R^{-1}$ is the total curvature of the circular
cylinder.

Figure~\ref{figure2}c plots the data for the pressure jump $\Delta p$
between the particle centre and the far-field (see
Figures~\ref{figure2}c and~\ref{figure2}d). We use $R = L/(2\pi)$ for
the vesicle radius, and the horizontal axis is the relative stretch.
The circles are data for a vesicle with radius 17 nm. The linear fit
(solid lines) shows that $R\Delta P$ is proportional to the relative
stretching $L/L_0 - 1$. The squares are for a vesicle with twice the
radius 34 nm, and the data overlap supports that the proportionality
constant $K_A$ is a stretching modulus that is independent of vesicle
size. The data give $K_A = 170 \pm 9$ pN nm$^{-1}$ which is in good
agreement with the experimentally obtained area moduli of lipid bilayer. 
The reason the tank-treading vesicle appears inextensible ($L$ is more or less constant in Figure~\ref{figure4}c) is because the modulus $K_A$ is large.
To evaluate how changes to particle size lead to different
physical properties, the asterisk symbols are for an 18 nm radius
vesicle consisting of particles with diameter 1.25 nm, instead of the
usual 2.5 nm. These data give a smaller stretching modulus of $K_A =
102$~pN~nm$^{-1}$. 

The particles in our setup do not abut but rather have small gaps due to
repulsive forces. The gaps allow for some fluid flux across the JP
bilayer, and in membrane mechanics aqueous flux is quantified by the equation
\begin{align}
  \label{eq:perm} 
  \frac{dA}{dt} = -P L \Delta p,
\end{align}
where $P$ is a hydraulic permeability constant (\cite{chabanon2017,
qua-gan-you2021}). Figure~\ref{figure2}e shows that the data for
$L\Delta p$ and $dA/dt$ obey the linear relationship expressed by
\eqref{eq:perm}. The slopes of the linear fits give the hydraulic
permeabilities $P = 0.0296$ nm$^3$ ns$^{-1}$ pN$^{-1}$ and $P = 0.0283$
nm$^3$ ns$^{-1}$ pN$^{-1}$ for the 17 nm and 34 nm radius cases,
respectively. Like the area modulus, the data give a permeability 
that is independent of vesicle size.

\begin{figure}
\begin{center}
\includegraphics[width=\textwidth]{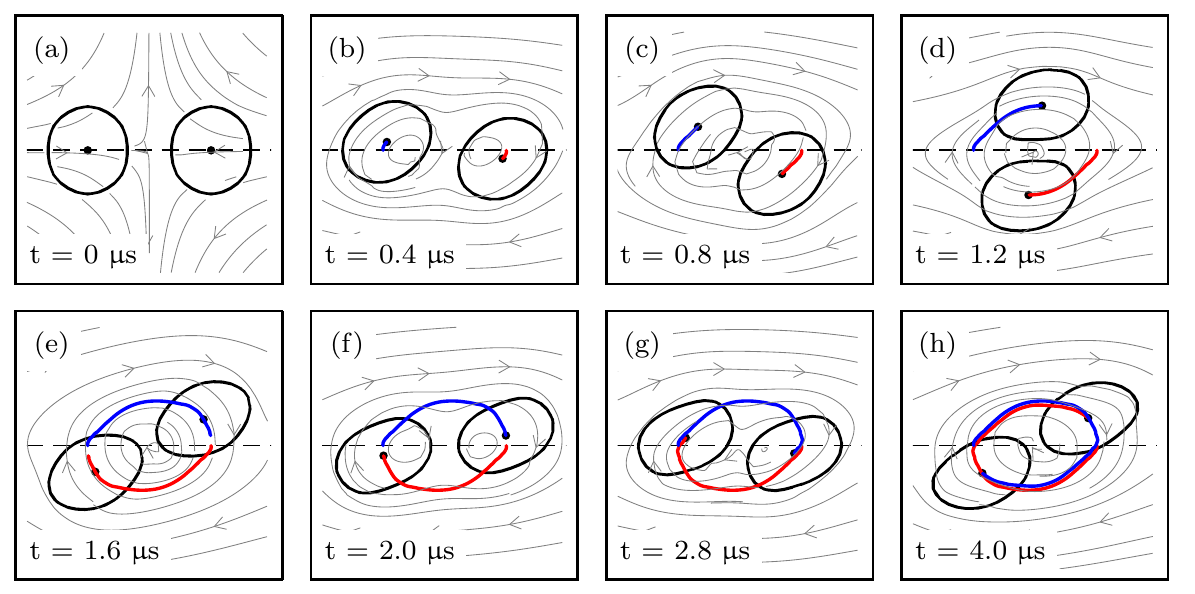}
\end{center} 
  \caption{\label{figure9} Two JP vesicles suspended in a shear flow
  interact by orbiting about the origin. The moving paths of the two
  centroids are plotted in blue and red. The centroids are initially
  located on the $x$-axis. The fluid velocity streamlines appear in
  grey.  The dimensionless shear rate is $\chi=0.005$.}
\end{figure}

The hydraulic permeability we calculate, however, is not in agreement
with experimentally derived values and is larger by a few orders of
magnitude. We suspect this discrepancy is due to inter-particle distance
of the JP being large compared to the inter-lipid spacing in real
bilayers. To test this, we calculate the permeability for the particles
with diameter 1.25 nm. Due to their smaller size but fixed repulsion
strength, these particles have a larger inter-particle spacing resulting
in an increase in permeability, $P = 0.0566$~nm$^3$~ns$^{-1}$~pN$^{-1}$
(Figure~\ref{figure2}e, asterisk symbols).

Since the vesicles in Figure~\ref{figure2}a are nearly circular, we can
combine~\eqref{eq:stretch},~\eqref{eq:LaplacePressure},
and~\eqref{eq:perm}, to derive
\begin{align}
\label{eq:perm_solution}
L_0(L-L(0)) + L_0^2 \ln\left|\frac{L-L_0}{L(0)-L_0}\right| = -4\pi^2 P K_A t.
\end{align}
All in all, the theoretical time courses for \eqref{eq:perm_solution} are in good
quantitative agreement with the JP data (Figure~\ref{figure2}f).

\subsection{Two Vesicles in a Linear Flow}

\begin{figure}
  \centering
\includegraphics[width=\textwidth]{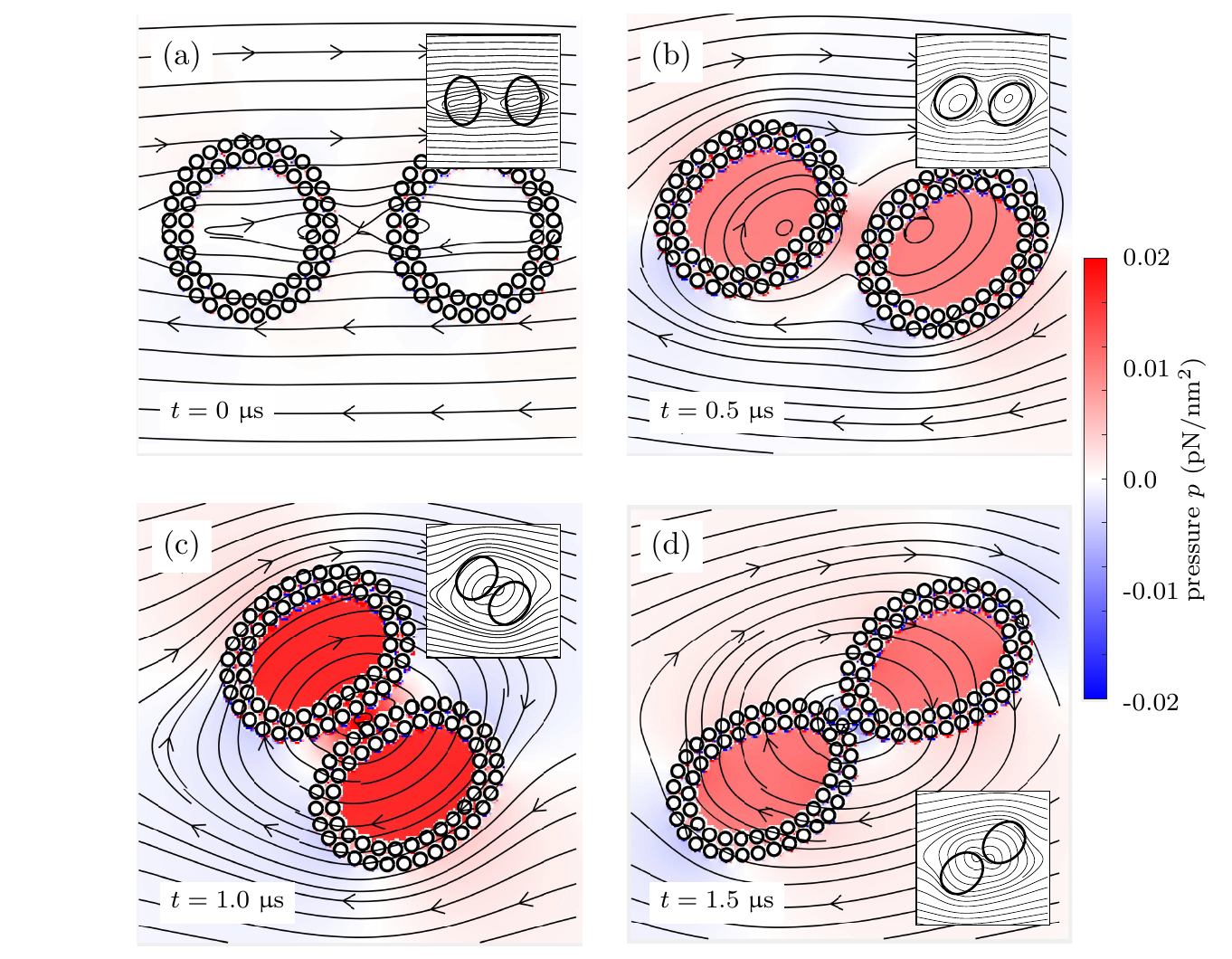}  
  \caption{\label{figure10} 
  The figure is for the same simulation as in Figure~\ref{figure9}, 
  but also plots the fluid pressure $p$. The color bar is identical for all panels. 
  Internal pressure is initially zero, but then grows as the vesicles come close to contact  (panel (c)).
  The insets are generated from simulations of a continuum model of the vesicles (\cite{qua-vee-you2019}). 
  The streamlines of the two models are in agreement.}
\end{figure}

\subsubsection{Shear Flow}
Finally, we can study vesicle-vesicle interactions in background flows. 
Figure~\ref{figure9} shows the simulation of two JP vesicles suspended
in a shear flow with shear rate $\chi=0.005$. We duplicate the
pre-relaxed $58$-body JP vesicle from previous sections and construct the
initial configuration shown in Figure~\ref{figure9}a. The two centroids
are at coordinates $(-25,0)$ and $(25,0)$ in nm. In all panels, the blue and red
curves show the trajectory of the two JP vesicle centroids. They have
nearly completed a full period by $t=4$ $\upmu$s. 

We show snapshots of the fluid pressure in Figure~\ref{figure10}. Since the
initial JP vesicles are pre-relaxed, there is initially no pressure jump
between the internal and external fluids (panel (a)). Panels (b)--(d)
show the configurations when $t = \{0.5,1,1.5\}$ $\upmu$s, and the
streamlines are plotted in the background for all panels. 
We include the numerical results from a continuum model in all insets and these
comparisons give a qualitative agreement between two models.
We also observe an adhesive
effect between the two JP vesicles that is set up by the hydrophobic
attraction (Figure~\ref{figure10}c). Similar dynamics have been observed between a pair of adhering vesicles in a shear flow
(\cite{qua-vee-you2019, abb-far-ezz-ben-mis2021}). This adhesive
behaviour is absent when two JP vesicles are well-separated.

\begin{figure}
\begin{center}
\includegraphics[width=0.97\textwidth]{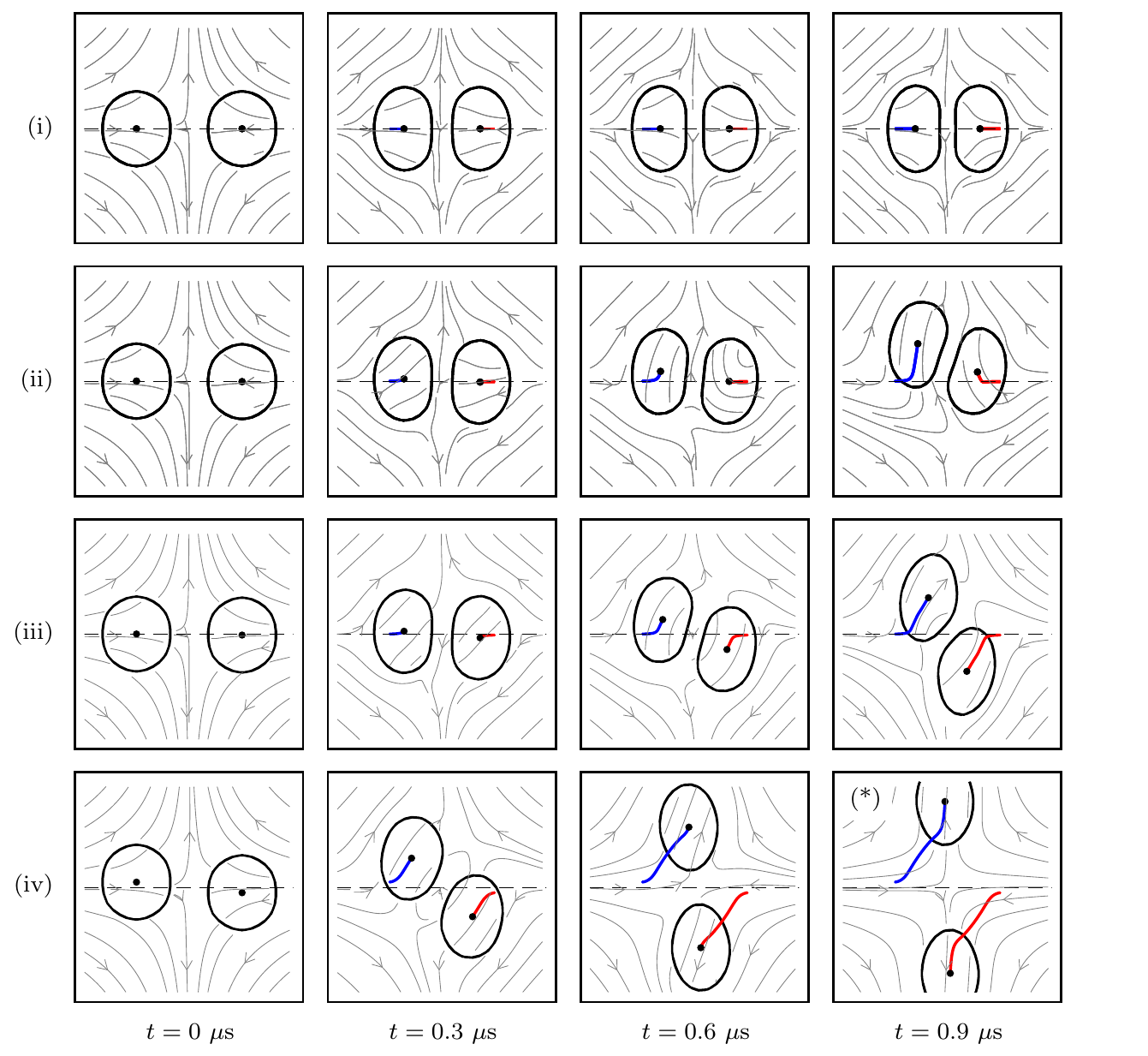}
\end{center} 
  \caption{\label{figure11} The initial placement of two JP vesicles
  suspended in an extensional flow affects the long-time dynamics. The
  moving paths of the two centroids are plotted in blue in red. Each row
  corresponds to a different initial placement of the JP vesicles. (i)
  Both JP vesicle centroids are on the $x$-axis. (ii) The centroid of
  the left JP vesicle is $0.25$~nm above the $x$-axis and the right JP
  vesicle is on the $x$-axis. (iii) The centroid of the left JP vesicle
  is $0.25$~nm above the $x$-axis and the right JP vesicle is $0.25$~nm
  below the $x$-axis. (iv) The centroid of the left JP vesicle is
  $1.25$~nm above the $x$-axis and the centroid of the right JP vesicle
  is $1.25$~nm below the $x$-axis. The streamlines appear in grey and
  the flow rate is $\dot \gamma =0.005$~ns$^{-1}$ in all cases.}
\end{figure}

\begin{figure}
  \centering
  \includegraphics[width=\textwidth]{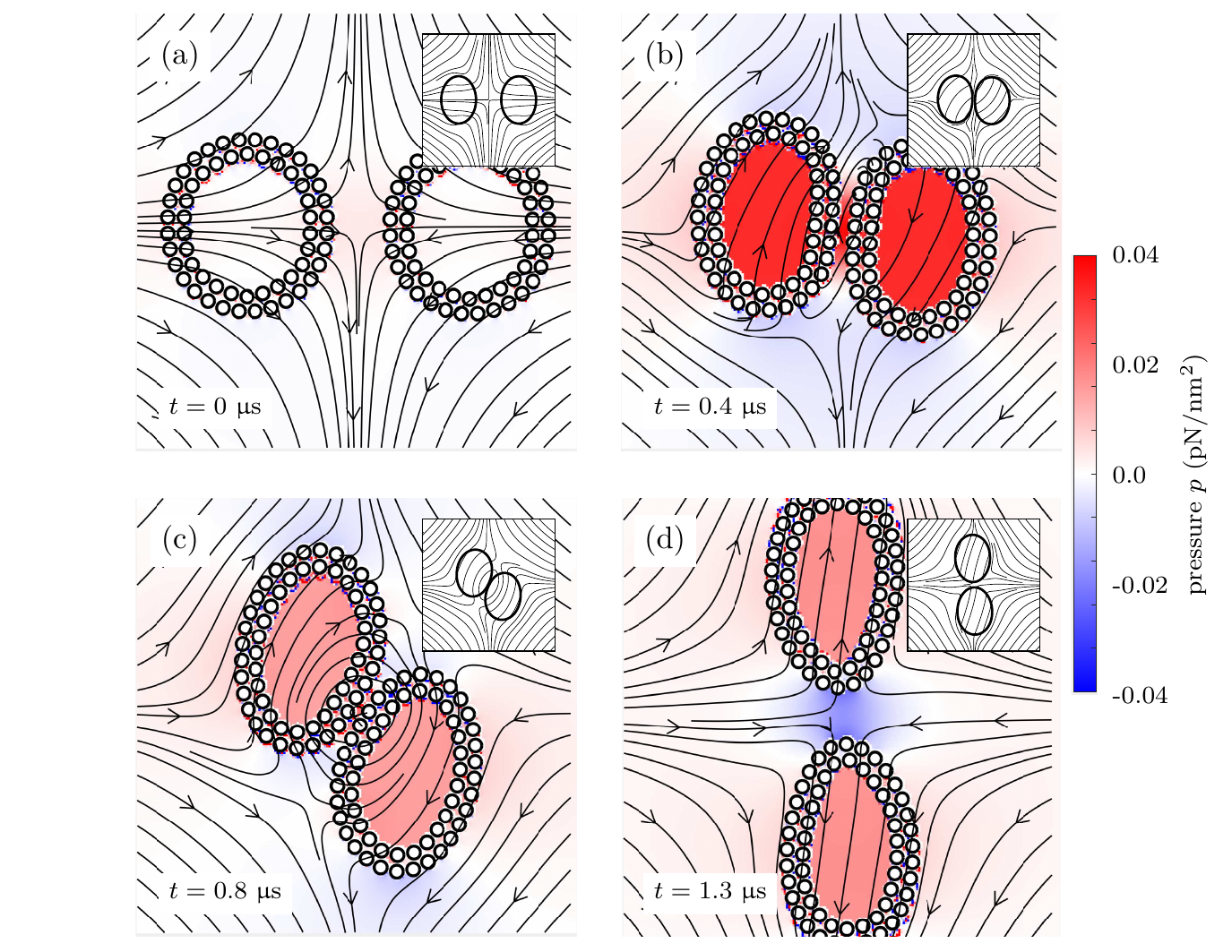}
  \caption{\label{figure12} The figure is for the simulation case (iii) of Figure~\ref{figure11},
  and additionally plots the pressure $p$.  The color bar is shared by all panels. 
  The insets of all panels are generated from simulations of a continuum
  model of the vesicles and the streamlines agree between the two models.}
\end{figure}

\subsubsection{Extensional Flow}

Using a similar setup to the shear flow case, we suspend the same two,
pre-relaxed, $58$-body JP vesicles and consider their dynamics in an
extensional flow 
\begin{equation}
\uu_{\infty}(\xx) = \dot \gamma ((\mathbf{e}_x \cdot \xx)\mathbf{e}_x - (\mathbf{e}_y \cdot \xx)\mathbf{e}_y)
\end{equation}
with extensional rate $\dot \gamma =0.005$ ns$^{-1}$. This extensional
flow is stretching in the $y$-direction and squeezing in the
$x$-direction. Figure~\ref{figure11} shows how the initial placement of
the JP vesicles affects the dynamics. When two centroids are both placed
symmetrically on the $x$-axis (case (i)), the JP vesicles come into
contact and reach a steady equilibrium. If one centroid is placed above
the $x$-axis (case (ii)), the two JP vesicle move together and then
upward. The migration of the right JP vesicle is a consequence of the
adhesive effect caused by the hydrophobic interactions. When the two
centroids start on opposite sides of the $x$-axis (case (iii)), they
eventually diverge from one another along the $\pm y$-directions.
Finally, when the two centroids start on opposite sides of the $x$-axis,
but with a greater displacement (case (iv)), the JP vesicles move much
faster along the $\pm y$-directions.

Figure~\ref{figure12} shows numerical results when the centroids of the
two Janus particles are placed at $(-10,-0.1)$~nm and $(10,0.1)$~nm
(case (iii) from Figure~\ref{figure11}). With this setup, the two JP
vesicles eventually separate along the $\pm y$-directions and we compare
the results against a continuum model as shown in all insets. Panels
(b)--(d) show the transient behaviour of the JP vesicles and the
continuum vesicles under an extensional flow. In both the coarse-grained
model and the continuum model, the vesicles initially converge towards
one other and then diverge along the $y$-axis. The behaviour of the
streamlines in both cases are similar. The pressure is initially largest
in the gap formed by two JP vesicles and decreases during the
separation. The short-range repulsion plays an important role to avoid
particle collisions.


\newpage
\section{\label{conclusion}Conclusion}

\cite{Fu20} developed a mathematical model to quantify the macroscopic
assembly and mechanics of a JP vesicle in a viscous solvent. The
interactions between JP are formulated as a second-kind integral
equation, which is coupled to the Stokes equations for the surrounding
incompressible fluid at the zero-Reynolds-number limit. Numerical
simulations of a JP suspensions revealed self-assembly of JP into
micelles and bilayers, providing an alternative means for computing
mechanical moduli, which often requires the knowledge of an equation of
state from experiments on a colloidal
membrane~(\cite{Balchunas2019_SM}).
%
%
Results in this work show great potential to study Janus colloids~(\cite{Bradley2017,Mallory2017}) and the morphology of colloid
surfactants~(\cite{Bradley2016}). 
%
For example, with the flexibility of the model, we can specify the
boundary condition on JP surfaces based on the chemicals used in
experiments.


In the present study, we used this integral formulation and numerical
algorithm to simulate the hydrodynamics of JP vesicles in background
flows. Under a linear shear flow, we found a JP vesicle to exhibit
elongation and tank-treading dynamics observed for a lipid bilayer GUV.
%
%
The results showed that the reduced area $A^*$ decreases with shear rate
but that the total length of a JP vesicle is conserved. The decay rate
of the reduced area was independent of the shear rate values between
$0.003$~ns$^{-1}$ and~$0.005$~ns$^{-1}$. Moreover, the proposed model
describes membrane rupture in high shear rates. Therefore, our method
can be applied to vesicles undergoing topological changes which is
difficult to simulate when using a continuum model that represents
vesicles as closed and continuous curves.
%
%

%
%
We estimated the inter-monolayer friction $b$, membrane permeability
constant $P$, and the membrane stretching modulus $K_A$. The
inter-monolayer friction coefficient was determined by calculating the
tangential shear force and slip velocities with respect to the bilayer
mid-plane. The range of friction coefficients agree with values reported
by~\cite{denOtter2007} in their MD study. The coarse-graining level of
the JP vesicle has a larger length scale than molecular dynamics
simulations, and in the future convergence studies we will investigate
how physical properties like the friction coefficient and membrane
permeability depend on the particle shape and size.

We also simulate the spatial migration of a JP vesicle in a parabolic
shear flow. 
%
Replicating the hydrodynamics of a GUV in a Poiseuille
flow~(\cite{Kaoui09, dan-vla-mis2009, cou-kao-pod-mis2008}), the JP
vesicle moves toward the centre of the shear flow. While the initial
reduced area of the JP vesicle is $A^* \approx 1$, the equilibrium
reduced area is $A^*=0.9$. For the parameters we used in the simulation,
the JP vesicle takes on an asymmetric, ``slipper" shape as it settles
above the centre of flow and exhibits tank-treading motion.
An interesting result contrast with continuum results is that the JP
vesicle oscillates at a height slightly above the centre of the
Poiseuille flow. 


We further simulated the hydrodynamics of two JP vesicles, and drew
comparisons with simulation results of two vesicles described by the
Helfrich continuum model. A comparison of the vesicle shapes and the
streamlines demonstrate remarkable similarities. The two overlapping
trajectories of the JP vesicles' centroids in a shear flow evolve as
expected when the two centroids are initialized on the same horizontal
level. We also observe a rotating behaviour that is observed for models
involving vesicle adhesion~(\cite{qua-vee-you2019}). The hydrophobic
attraction led to this adhesive effect when two JP bilayers are
sufficiently close. We also performed several simulations of a pair of
JP vesicles suspended in an extensional flow. By varying the initial
vertical displacement of the vesicles' centroids, we can control for
divergent trajectories and obtain similar results to the continuum
model.

In \S~\ref{subsec:calculating_force}, we derived an alternative integral
form for calculating the force and torque to avoid the singular integral
evaluation. These alternative integrals allow us to accurately resolve
trajectories over long times without having to rely on computationally
expensive quadratures.

Our future goals include extending the current framework to a
three-dimensional JP vesicle system. This will require additional
algorithmic implementation including a fast summation method such as the
fast multipole method. Another research direction is to include the
fluctuating hydrodynamics for Brownian suspensions~(\cite{Bao2018}),
and this is critical to understand membrane diffusion. Finally, a more
physical boundary conditions for the HAP model will allow us to draw
comparisons between computational and laboratory experiments.

\begin{acknowledgments}
  {\bf Acknowledgments:} B.Q.~acknowledges support from NSF (Grant
  No.~DMS 2012560) and from the Simons Foundation, Mathematics and
  Physical Sciences-Collaboration Grants for Mathematicians (Award
  No.~527139). Y.-N. Y.~acknowledges support from NSF (Grant No.~DMS 1614863 and 1951600) and
  Flatiron Institute, part of Simons Foundation.
\end{acknowledgments}

%

\bibliographystyle{jfm}

\providecommand{\noopsort}[1]{}\providecommand{\singleletter}[1]{#1}%

\end{document}